\newtheorem{koro}{Corollary}[section]
\title{How to extract data from proprietary software database systems using TCP/IP?\footnote{This copyright protected article is for scientific purpose only. According to German law this scientific document is not suposed to act or be abused against §202 StGB.}}
\author{Marc Burdon\footnote{Dipl.-Inform. Marc Burdon is a computer scientist, who graduated at the University of Bonn (Germany).} (burdon@cs.uni-bonn.de).\footnote{Copyright \copyright 2014 by Marc Burdon.}}
\date{December 2011 (publishing April 2014)\footnote{This paper was written in December 2011, but refreshed and pubished now, in April 2014.}}
\begin{document}

\maketitle

\section{Abstract}
This document is a {\bf white paper}\footnote{This paper (Version \today) does not give any scientific progress, but it shows potentials of the chosen topics on higher scientific level.} about how to connect reverse engineering and programing skills to extract data from a proprietary implementation of a database system to build EML-Tools\cite{EML} for data format conversion into raw data.\\
This article shows how to access data of a source software system without any interface for data conversion. We discuss how raw data can be transfered into structural format by using XML or any other custom designed software solution.
For demonstration purposes only, we will use a CRM\cite{CRM} system called {\em Harmony$^{\mbox{\tiny\textregistered}}$} by Harmony$^{\mbox{\tiny\textregistered}}$ Software AG\footnote{All trademarks are property of their owners, as Harmony$^{\mbox{\tiny\textregistered}}$ is of Harmony Software AG.}, the programing language {\em Python} and methods of computer security, which are used to get quick access to the raw data.

\section{Requirements}

\begin{itemize}
\item Python, programing skills in Python\footnote{Please refer to docs.python.org} and in network programing\cite{C[13]};
\item Reverse Engineering skills\cite{E[05]};
\item XML knowledge (optionally recommended)\footnote{Please refer to www.w3.org}\cite{XML};
\item Wireshark\footnote{Please refer to www.wireshark.org} and one trail copy of Harmony$^{\mbox{\tiny\textregistered}}$.
\end{itemize}

\section{Theory}
\subsection{Method to Access Data}
\begin{figure}[t]
 \centering 
\includegraphics[width=1.0\textwidth]{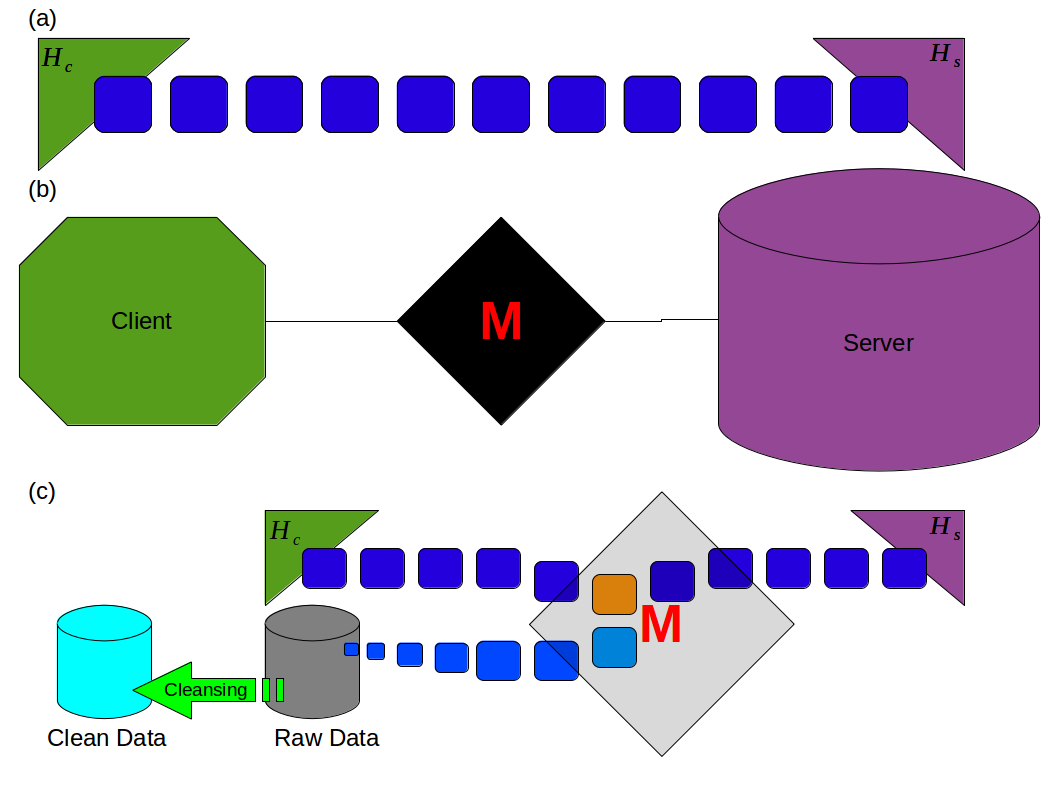} 
\caption[Scheme of concept]{ Figure (a) shows the regular data package streams between client and server. During Man-in-the-middle attack (b) data packages will be copied and stored as raw data (c). When data extraction is finished, the data is ready to be migrated to another system e.g. a SQL database.} 
\label{fig:man-in-middle_diagram.png} 
\end{figure}

In the beginning, we realize that offline data extraction from files on the file system of the database server will cost much more time than we will need using {\em TCP/IP} technology, because the reverse engineering process is much more quicker in this case, which is using the TCP/IP client of Harmony$^{\mbox{\tiny\textregistered}}$ CRM software solution. We have to know the structure of the Harmony$^{\mbox{\tiny\textregistered}}$ CRM software system that we just name $H$.
The relevant core of $H$ consists of a TCP/IP client and a TCP/IP server, which is the database server.
We name Harmony$^{\mbox{\tiny\textregistered}}$ Software client as $H_c$  and Harmony$^{\mbox{\tiny\textregistered}}$ software database server as $H_s$.
We define a data package $d=(b_0, \cdots, b_n)$ with $b_0, \cdots, b_n$, which are bytes coding a string character. We also define the set of data packages $D$ and a set of raw data $R$. We also use a self made network package filter $M(\cdot,\cdot)$. 
${ M(\cdot,\cdot)}$ is actually a function mapping data packages transfered by using TCP/IP from $H_c$ and $H_s$ (and vice versa) to $R$, so formally $M$, $M: D \times D \rightarrow R$, is commutative.
We use $M$ for a {\em Man-in-the-middle} attack\cite{ATT} on $H$. Technically $M$ is a hybrid TCP/IP of both client and server system, which is $H_c$ is connected to and $H_s$ is connected by. 
The data connection between $H_c \mbox{ and }H_s$ is not encrypted by default.
It can be encrypted by using e.g. {\em TLS/SSL} technology -- though the funtionality is implemented by using {\em SSH}\cite{ssh3}\cite{ssh0} --, but by default in local area networks traffic is not supposed to be encrypted by manufacturer.

\begin{figure}[h]
 \centering 
\includegraphics[width=1.0\textwidth]{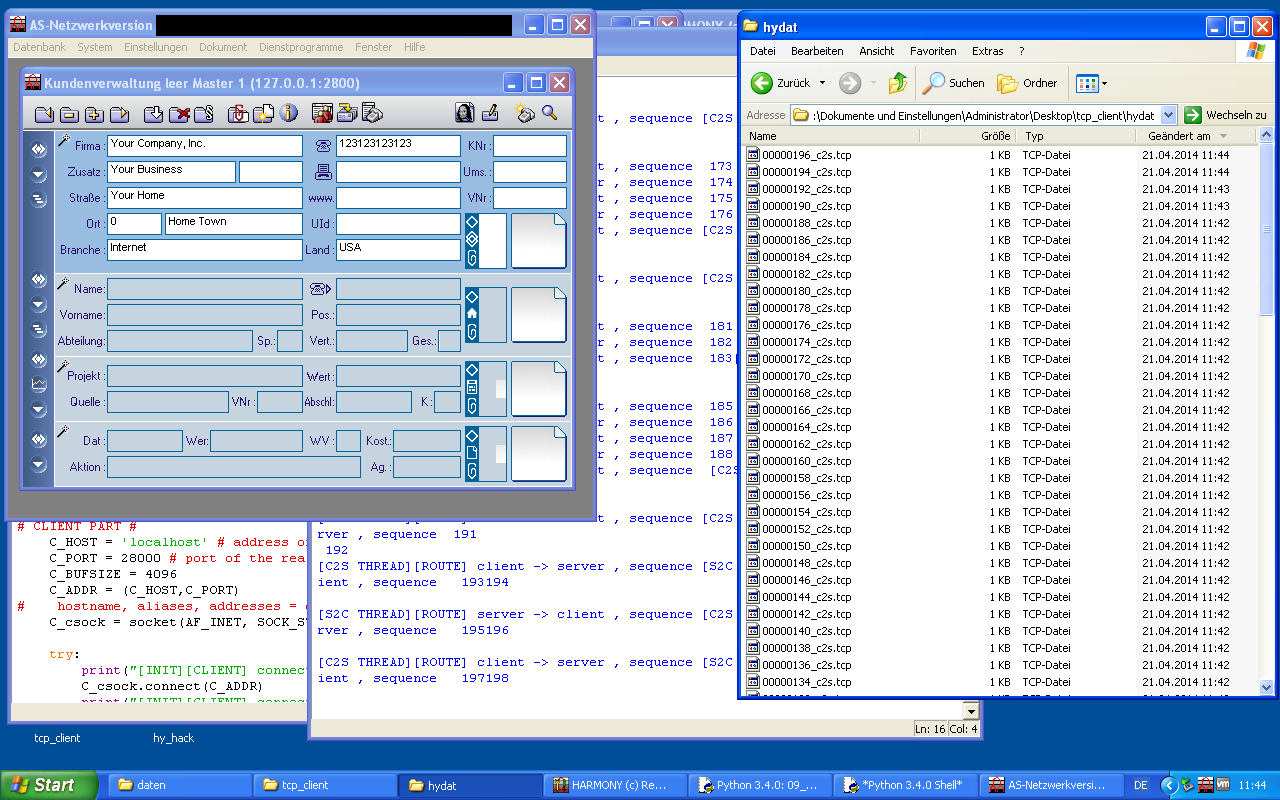} 
\caption[Screenshot of running $M$ and $H$]{This is a screenshot of the Desktop of the machine that is hosting $H$ and $M$ (09\_tcp\_hyrouter.py). At his point the connections between $H_c$ and $M$ as well as $M$ and $H_s$ are established. $M$ is capturing all TCP/IP data packages, which are sent from $H_c$ to $H_s$.} 
\label{fig:06.png} 
\end{figure}

SSH tunneling is activated for remote data exchange only.
In fact, because of missing encryption it is easy to interpose $M$ and run a Man-in-the-middle attack.

The client server protocol of $H$ also misses handshake based authentification procedure that uses cryptographic functions.
It is only string based, meaning to make $M$ establish a connection to the server we just easily have to sniff -- e.g. using wireshark -- the first data packages to get the string, which consists of user name and password. The string can establish communication to the database server without even logging into the system. The system looks up the user in its user list and verifys the password to grant access.

\subsection{Extraction of Raw Data}
The reverse engineering process is easy since $M$ is implemented and runs interpoled between $H_c$ and $H_s$. Once we are connected to the server via $M$, $M$ can log network transmission traffic, so client commands, which gets sent to the server, can be identified.
This process is very easy to handle, since on each pressed button of the {\em GUI} of $H_c$ the client sends the command containing data packages over the established TCP/IP connection.

Obviously $M$ can also be used as standalone client without being connected by client, meaning formally ${ M(\emptyset,D\prime)}$ with a generator function, which generates fake packages from captured data, or ${ M(D_f,D\prime)}$ with $D_f$ is the set of fake data packages, which contains server and database control commands, and $D\prime$ consisting of data packages from $H_s$. To get a proper $D_F$ we have to reverse engineer the full native protocol of $H$.

\begin{figure}
 \centering 
\includegraphics[width=1.0\textwidth]{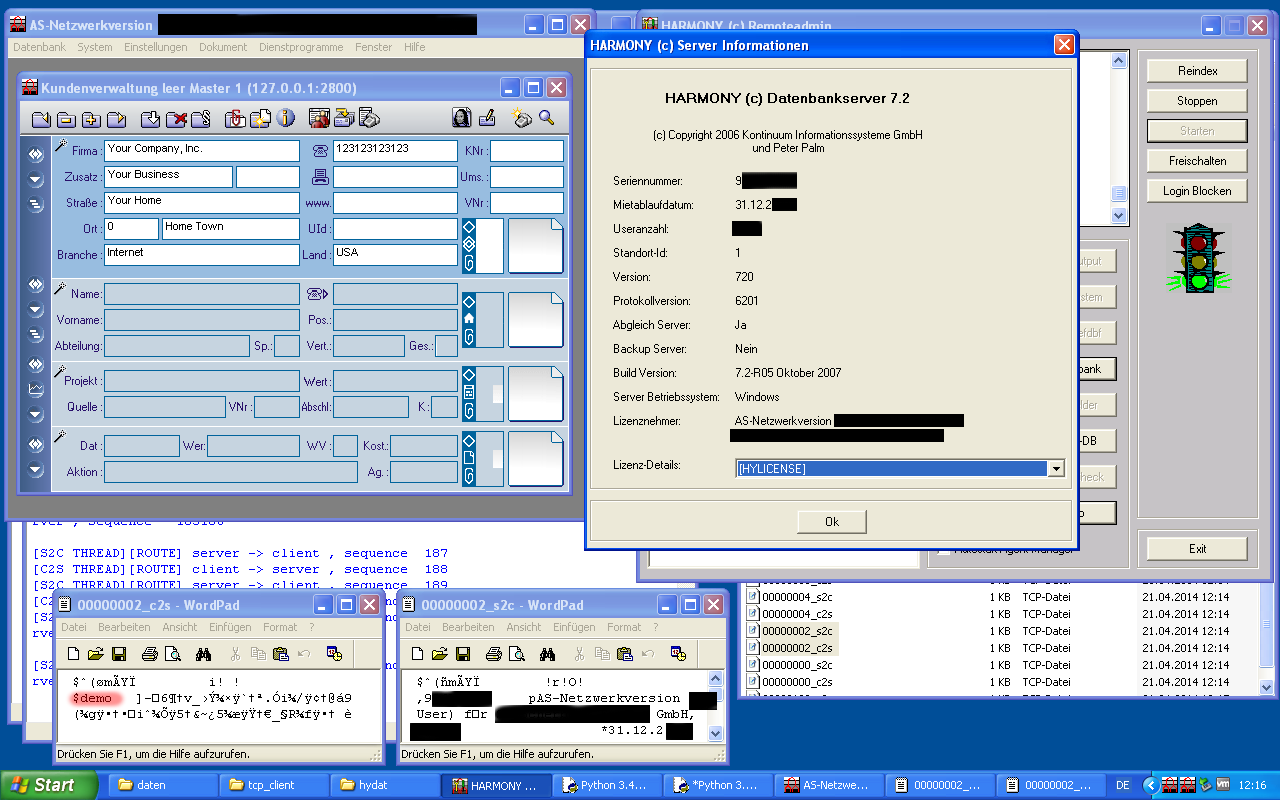} 
\caption[Screenshot of captured TCP/IP data packages]{Each transfered TCP/IP package has been captured by $M$ and saved to the hard disk (see right corner). The user name {\em demo} and the licence information have been transfered in plain text and can be captured.} 
\label{fig:28.png} 
\end{figure}

 This can be done by using $H_c$ on $M$ to connect to $H_s$ as described above.

To keep work time as low as possible, we skip reverse engineering of the full proprietary protocol of $H$. Instead, we just take the most important client commands that transfer data to the client to display it to the user. The traffic contains the raw data, we are looking for. $M$ will copy and save the raw data on-the-fly.


\subsection{Data Cleansing}

Finally, the collected raw data $R$ is full of non-data relevant symbols, so {\em Data Cleansing} (dc) is required for proper data extraction. 
Therefore we have to identify the introduction and the termination bytes of the data. We have to define a structure that will be build by the introduction and termination bytes, which $R$ contains. For our algorithm (see Figure \ref{algo}) we define a function $\mbox{build}_{structure}(\cdot,\cdot)$ that  builds the structure of the extracted data and removes given introduction and termination bytes. The structure of the extracted data is usually based on tree-like structures and/or SQL table schemes, when migrating data to SQL database. On the other hand it is also possible to extract the data into the file system, meaning to store the text data in text files and binary data like pictures as binary image files. So the design of the structure depends on the application.
The implementation of $M$, which is based on the prototype {\em 09\_tcp\_hyrouter.py} will not extract data. 09\_tcp\_hyrouter.py is suposed to capture network traffic only to support reverse engineering process of the trasfer protocol of $H$ in this paper.

\begin{figure}[t]
\fbox{\hspace*{.4cm}
\begin{minipage}{11.7cm}
{\bf Simple Algorithm for Data Cleansing} \\ \\
{\bf Input:} Set $C$ of m known data bytes $C= \{c_0, ... , c_{m-1} \}$ \\
\hspace*{1.22cm}Set of n raw data bytes $R= \{r_0, ... , r_{n-1} \}$\\
{\bf Output:} Clean and structured data in bytes as a set $S_{XML}$\\ \\
{\bf init} empty set $S_{XML}$;\\
{\bf do} $S_{XML} =\mbox{ build}_{structure}(R,C)$\\
{\bf return: $S_{XML}$}  \\
\end{minipage}
}
\caption[Simple Algorithm For Data Cleansing]{Assuming that we know the bytes, which introduce and terminate data, we manage them in a set called $C$ with $C\subset R$. We can use this algorithm for data cleansing. Function $\mbox{build}_{structure}(\cdot,\cdot)$ constructs the data structure by setup filtering $C$ and returning $R\backslash C$ in XML syntax as $S_{XML}$. } 
\label{algo}
\end{figure}

\begin{figure}
 \centering 
\includegraphics[width=0.75\textwidth]{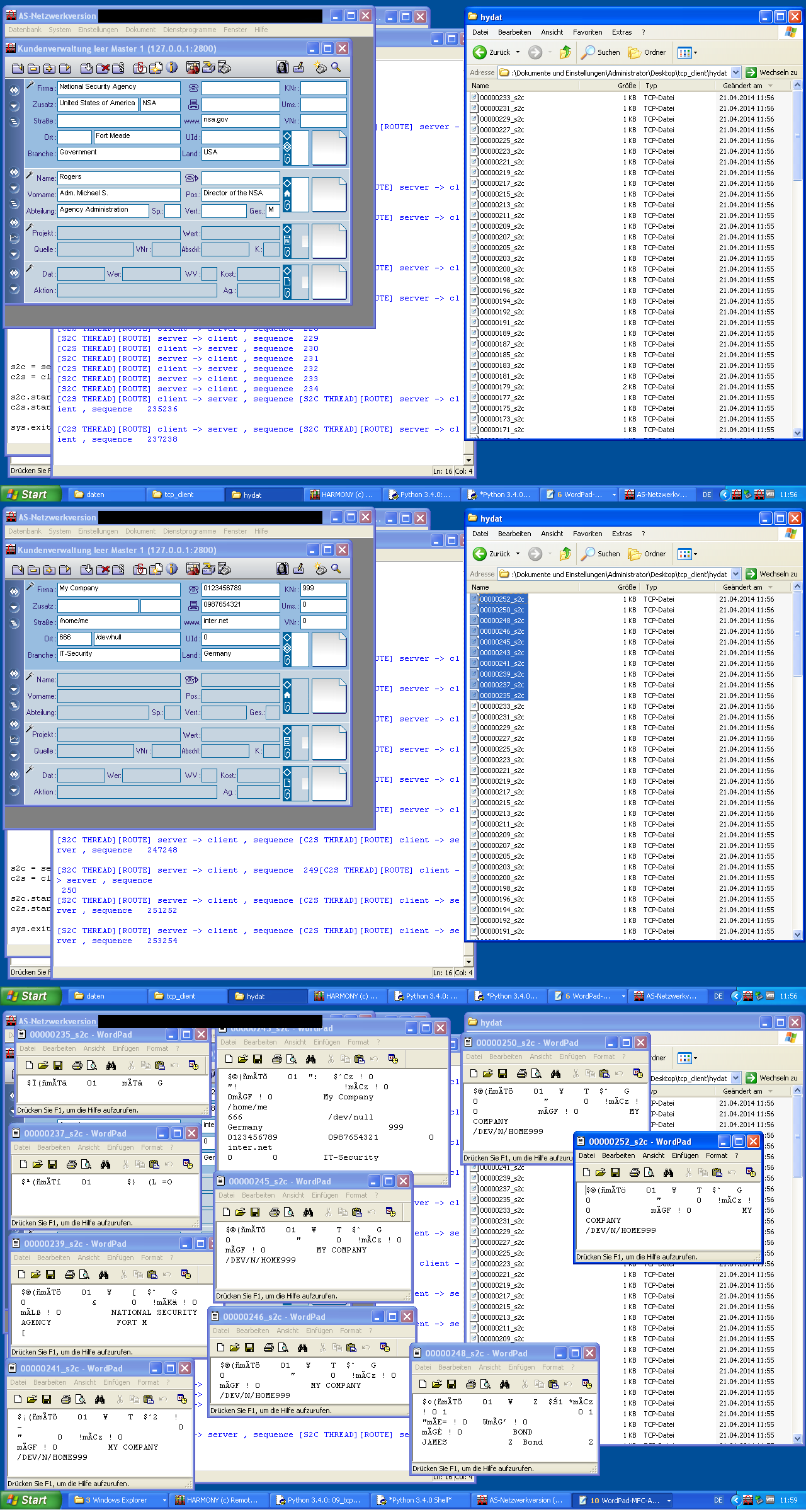} 
\caption[Screenshot of captured TCP/IP data packages]{Each transfered TCP/IP package has been captured and stored to local the hard disk drive. Data was transfered in plain text.} 
\label{fig:23+21+22.png} 
\end{figure}

\begin{koro}
Simple Algorithm for Data Cleansing needs $O(n^2)$ steps to terminate.
\end{koro}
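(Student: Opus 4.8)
The plan is to trace the single substantive line of the algorithm, namely the call $S_{XML} = \mbox{build}_{structure}(R,C)$, since the \textbf{init} of the empty set $S_{XML}$ and the final \textbf{return} each cost $O(1)$ and the \textbf{do} block is executed exactly once (there is no loop, so termination is immediate). Thus the whole question reduces to bounding the cost of one evaluation of $\mbox{build}_{structure}(\cdot,\cdot)$.

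First I would unfold what $\mbox{build}_{structure}$ does according to the caption of Figure~\ref{algo}: it filters the bytes of $C$ out of $R$ and emits $R\backslash C$ in XML syntax. The natural implementation of the filtering, using no auxiliary lookup structure, is to walk once through the $n$ bytes $r_0,\dots,r_{n-1}$ of $R$ and, for each $r_i$, test the membership $r_i\in C$ by a linear scan over the $m$ bytes $c_0,\dots,c_{m-1}$ of $C$. That inner scan costs $O(m)$ per byte, so the filtering loop costs $O(nm)$ overall; appending the surviving bytes to the output and wrapping them in XML tags adds only a further $O(n)$, which is absorbed.

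Next I would invoke the hypothesis stated in the same caption, $C\subset R$, which forces $m\le n$, whence $O(nm)\subseteq O(n^2)$. Combining this with the $O(1)$ for \textbf{init} and \textbf{return} yields a total of $O(n^2)$ steps, which is the claim. I would also note that the bound is tight in this model: when the delimiter bytes are all distinct and $m=\Theta(n)$, the linear-scan filtering performs $\Theta(n^2)$ comparisons.

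The main obstacle I expect is that the paper never pins down the implementation of $\mbox{build}_{structure}$, so the estimate is only as honest as the assumed cost model. If membership in $C$ were resolved through a hash set the filtering would collapse to $O(n)$; conversely, if "building the structure" meant, say, inserting each retained byte into an unbalanced search tree or re-scanning $R$ to locate matching delimiter pairs, the accounting would differ although one could still argue the total stays within $O(n^2)$. I would therefore make explicit the assumption that each membership query against $C$ and each structural insertion costs $O(n)$ in the worst case — the model implicit in the pseudocode — and derive the $O(n^2)$ bound under it, remarking that any reasonable refinement (hashing, sorted $C$ with binary search) only improves the running time and hence preserves the stated upper bound.
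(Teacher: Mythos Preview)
Your argument is correct and rests on the same key observation as the paper: the dominant cost is the $O(mn)$ membership filtering of $R$ against $C$, and $C\subset R$ gives $m\le n$, hence $O(n^{2})$. The paper's own proof is coarser in its accounting---it splits the work into three phases (``initiation'', ``structural analysis'', ``constructing XML''), bounds each separately by $O(n^{2})$, and then sums $3\cdot O(n^{2})=O(n^{2})$---whereas you assign $O(1)$ to \textbf{init}/\textbf{return} and only $O(n)$ to the XML emission, isolating the $O(mn)$ filtering as the sole quadratic term. Your version is thus tighter and more explicit about the underlying cost model (and you additionally observe tightness when $m=\Theta(n)$ and discuss how alternative data structures would affect the bound), but the route is not genuinely different: both proofs ultimately reduce to the inequality $mn\le n^{2}$ obtained from $C\subset R$.
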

\begin{proof}
We have two sets $R$ and $C$ with $|C|<|R|$, because of $C\subset R$, so initiation is
$$|C|\cdot|R|=m\cdot n = O(n^2). $$ 
Obviously structural analysis takes $O(n^2)$, because we have max. of $m\cdot n$ steps.
Further more, constructing XML based tree structure from data takes $O(n^2)$, because the function has to check input and refer to predefined actions regarding to introduction and termination bytes, so the procedure is powered by n in the length of the input, meaning $O(n^2)$.
Simple Algorithm for Data Cleansing is terminating, because $C$ and $R$ are not infinite. It runs correct, because all byte coded string symbols are predefined and known to the algorithm.
Finally, we get
$$3\cdot O(n^2) = O(n^2)$$
by definition.
\end{proof}

\section{Practice and Future Work}
A demo version of $M$ a.k.a. 09\_tcp\_hyrouter.py written in C/C++ is about TBA for scientific purposes.\footnote{Please refer to cs.burdon.de/downloads for further information.}
A detailed description of the transfer protocol of $H$ is available in the full report of {\em How to extract data from proprietary software database systems using TCP/IP}. The full reverse engineered description of the transfer protocol of $H$ is TBA.
Regarding demonstrative details, please refer to Figure, \ref{fig:06.png}, \ref{fig:28.png} and \ref{fig:23+21+22.png}. 
For detailed information about this topic, please refer to the full report of {\em How to extract data from proprietary software database systems using TCP/IP}.\footnote{Please refer to cs.burdon.de/hy}

$M$ will be redesigned and implemented in C/C++ with additional features like data cleansing function and compatibility to XML. It will automatically convert the output in XML, so data migration to SQL will be possible. Additionally there will be a data extraction mode, which will optionally store the raw or fully extracted data on the hard disk drive -- meaning a data cleansing procedure will be executed for full data extraction. The implemantation will be available for Windows$^{\mbox{\tiny\textregistered}}$, Linux$^{\mbox{\tiny\textregistered}}$, MacOSX$^{\mbox{\tiny\textregistered}}$ and NetBSD. 

\nocite{*}
\bibliographystyle{IEEEtran}
\bibliography{IEEEabrv,IEEEexample}

\end{document}